\documentclass{amsart}

\usepackage{graphicx}
\usepackage{latexsym,color}
\usepackage{amssymb}
\usepackage{amsfonts}
\usepackage{amsmath}
\usepackage{mathrsfs}
\usepackage[numbers]{natbib}

\newcommand{\Cov}{\operatorname{Cov}}
\newcommand{\Var}{\operatorname{Var}}

\numberwithin{equation}{section}
\newtheorem{theorem}{Theorem}[section]

\begin{document}

\title{High-Frequency Exponential-Utility Maximization under Fractional Brownian Motion}

\author{Yan Dolinsky}
\address{Department of Statistics, Hebrew University of Jerusalem, Israel}
\email{yan.dolinsky@mail.huji.ac.il}

\thanks{Supported in part by the ISF grant 305/25.}
\date{\today}

\begin{abstract}
We study exponential-utility maximization for high-frequency trading in a
discretized fractional Brownian motion model. Using spectral methods for
stationary Gaussian sequences, we derive the asymptotic growth rate of the
optimal certainty equivalent. We also show that the suitably rescaled optimal
positions converge in finite-dimensional distributions to a Gaussian
white-noise-type field.
\end{abstract}

\subjclass[2020]{91G10, 91G80}
\keywords{Fractional Brownian motion, high frequency, utility maximization}

\maketitle

\markboth{}{}
\pagenumbering{arabic}

\section{Introduction and Main Results}
Fractional Brownian motion 
with Hurst parameter $H\in (0,1)$ is
a continuous, zero-mean Gaussian process $B^H=(B^H_t)_{t\geq 0}$ such that
\begin{equation*}
\Cov\left(B^H_t,B^H_u\right)=\frac{t^{2H}+u^{2H}-|t-u|^{2H}}{2}, \ \ t,u\geq 0.
\end{equation*}
The process $B^H$ is self-similar, satisfies $B^H_{at}\overset{d}{=}a^H B^H_t$, and
has stationary increments. Moreover, the successive 
increments of $B^H$ are positively correlated for $H>1/2$, negatively correlated for $H<1/2$, while $H = 1/2$ 
recovers the usual Brownian motion with independent increments.

Fractional Brownian motion, which can display the long-range dependence observed in empirical data when $H>1/2$
(see \cite{CKW,M:1968,WTT} and the references therein),
is not a semimartingale when 
$H\neq \frac{1}{2}$ and so, in the frictionless case it leads to arbitrage opportunities (see, for instance, 
\cite{R,C}). If trading is restricted to a discrete trading grid, arbitrage opportunities disappear.

This short note was inspired by \cite{GuasoniMishuraRasonyi} where the authors study high-frequency trading in a market in which the asset price is driven by fractional Brownian motion. They derive explicit locally mean--variance optimal trading strategies for a fixed trading frequency and analyze their behavior as the trading frequency tends to infinity. In this note, we provide an analogous analysis for exponential utility maximization.

We consider a discretized version of the fBm framework. Namely, we fix a time horizon \(T>0\), 
\(n\in\mathbb N\), and consider a financial market that is active at times
$
\left\{0,\frac{T}{n},\frac{2T}{n},\ldots,T\right\}.
$
The financial market consists of a savings account, which for simplicity bears no interest, and a risky asset \(S=B^H\).
The investor's flow of information is given by the filtration
$
\mathcal F_{\frac{kT}{n}}
:=
\sigma\left\{S_0,S_{\frac{T}{n}},\ldots,S_{\frac{kT}{n}}\right\}$,\
$k=0,1,\ldots,n.$

A trading strategy is a random vector $\gamma=\left(\gamma_{\frac{iT}{n}}\right)_{i=0}^{n-1}$ that is adapted to the above filtration. 
Denote by \(\mathcal A_n\) the set of all trading strategies.
For \(\gamma\in\mathcal A_n\), the corresponding portfolio value at the maturity date is given by
\[
V^\gamma_T
=
\sum_{i=1}^{n}
\gamma_{\frac{(i-1)T}{n}}
\left(
S_{\frac{iT}{n}}-S_{\frac{(i-1)T}{n}}
\right).
\]

The investor's preferences are described by the exponential utility function
\[
u(x)=-\exp(-\alpha x),
\qquad x\in\mathbb R,
\]
with absolute risk-aversion parameter \(\alpha>0\), and her goal is to
\begin{equation}\label{2.2}
\operatorname{maximize}\quad
\mathbb E
\left[
-\exp\left(-\alpha V^\gamma_T\right)
\right]
\quad\text{over}\quad
\gamma\in\mathcal A_n.
\end{equation}
Clearly, for any trading strategy \(\gamma\) and any constant \(\lambda\in\mathbb R\),
$
V^{\lambda\gamma}_T=\lambda V^\gamma_T.
$
Thus, without loss of generality, we set the risk-aversion parameter to
\(\alpha=1\). Moreover, by the self-similarity of fractional Brownian
motion, we may also assume, without loss of generality, that \(T=1\).

Next, before we formulate our main results we need some preparations.

\subsection{Spectral Density and Asymptotic Formulas}

For the spectral analysis, it is convenient to work with a two-sided fractional
Brownian motion. We therefore extend $B^H$, originally defined on
$\mathbb R_+$, to a centered Gaussian process
$B^H=(B_t^H)_{t\in\mathbb R}$ on the whole real line, with covariance function
\[
\mathbb E[B_t^H B_s^H]
=
\frac12\left(|t|^{2H}+|s|^{2H}-|t-s|^{2H}\right),
\qquad s,t\in\mathbb R.
\]
Its restriction to $\mathbb R_+$ has the same law as the fractional Brownian
motion introduced above.

Fractional Gaussian noise is the discrete-time increment process associated
with this two-sided fractional Brownian motion. More precisely, define
\begin{equation}\label{4}
X_j:=B_j^H-B_{j-1}^H,
\qquad j\in\mathbb Z.
\end{equation}
Then $(X_j)_{j\in\mathbb Z}$ is a centered stationary Gaussian sequence with
covariance function
\[
\rho_H(k)
:=
\mathbb E[X_0X_k]
=
\frac12\left(
|k+1|^{2H}-2|k|^{2H}+|k-1|^{2H}
\right),
\qquad k\in\mathbb Z.
\]

Since $(X_j)_{j\in\mathbb Z}$ is stationary, its covariance function admits a
spectral representation; see, for instance,
\cite[Chapter~8]{BrockwellDavis}. Its spectral density is the $2\pi$-periodic
function
\[
f_H(\lambda)
:=
\sum_{k\in\mathbb Z}\rho_H(k)e^{-ik\lambda},
\qquad \lambda\in(-\pi,\pi)
\]
and has the explicit representation; see \cite{Sinai},
\begin{equation*}\label{spectral-density}
f_H(\lambda)
=
2\Gamma(2H+1)\sin(\pi H)(1-\cos\lambda)
\sum_{m\in\mathbb Z}|\lambda+2\pi m|^{-1-2H},
\qquad 0<|\lambda|<\pi.
\end{equation*}

The spectral density determines the asymptotic behavior of the covariance
matrices
\begin{equation}\label{2}
\Gamma_n^H
:=
\bigl(\rho_H(j-k)\bigr)_{j,k=1}^n.
\end{equation}
Indeed, an application of the Szeg\H{o} limit theorem, in the form given in
\cite[Theorem~1.7]{GuoLiZhou}, yields
\begin{equation}\label{szego-limit}
\lim_{n\to\infty}\frac1n\log\det\Gamma_n^H
=
\frac1{2\pi}\int_{-\pi}^{\pi}\log f_H(\lambda)\,d\lambda.
\end{equation}
Thus, the exponential growth rate of the determinant is determined by the
geometric mean of the spectral density.

Moreover, the spectral representation can be used to compute the error of
interpolating $X_0$ from all the remaining variables $(X_j)_{j\neq0}$. More
precisely, the classical interpolation formula for stationary processes gives
(see \cite{Kolmogorov:41,Salehi})
\begin{equation}\label{interpolation-formula}
\Var\bigl(X_0\mid X_j,\ j\in\mathbb Z\setminus\{0\}\bigr)
=
\left(
\frac1{2\pi}\int_{-\pi}^{\pi}\frac{d\lambda}{f_H(\lambda)}
\right)^{-1}.
\end{equation}
The integrals on the right-hand sides of \eqref{szego-limit} and
\eqref{interpolation-formula} are finite because
$f_H(\lambda)\asymp|\lambda|^{1-2H}$ as $\lambda\to0$.

\subsection{Main Results}

We start with the first limit theorem.

\begin{theorem}\label{thm1}
The normalized certainty equivalent under high-frequency trading with
fractional Brownian motion satisfies
\begin{align}\label{certainty-equivalent-limit}
&\lim_{n\to\infty}
-\frac1n
\log\left(
\inf_{\gamma\in\mathcal A_n}
\mathbb E\left[e^{-V_1^\gamma}\right]
\right)
\notag\\
&\qquad=
\frac12\left[
\frac1{2\pi}\int_{-\pi}^{\pi}\log f_H(\lambda)\,d\lambda
+
\log\left(
\frac1{2\pi}\int_{-\pi}^{\pi}\frac{d\lambda}{f_H(\lambda)}
\right)
\right].
\end{align}
\end{theorem}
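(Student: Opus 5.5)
The plan is to compute the value at each fixed $n$ exactly through a Gaussian change of measure, and then to pass to the limit using \eqref{szego-limit} and \eqref{interpolation-formula}.

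\emph{Reduction.} By self-similarity and stationarity of increments, the vector $\bigl(S_{i/n}-S_{(i-1)/n}\bigr)_{i=1}^n$ has the law of $n^{-H}(X_1,\dots,X_n)$, with covariance $n^{-2H}\Gamma_n^H$. The filtration generated by the prices equals the one generated by the increments. The class $\mathcal A_n$ is invariant under multiplication by the constant $n^{H}$. Hence
\[
\inf_{\gamma\in\mathcal A_n}\mathbb E\bigl[e^{-V_1^\gamma}\bigr]=\inf_{\gamma}\mathbb E\Bigl[\exp\Bigl(-\sum_{j=1}^n\gamma_{j-1}X_j\Bigr)\Bigr],
\]
where $X\sim N(0,\Gamma_n)$, $\Gamma_n:=\Gamma_n^H$, and $\gamma_{j-1}$ is $\sigma(X_1,\dots,X_{j-1})$-measurable. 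Set $M:=\Gamma_n^{-1}$ and $D:=\operatorname{diag}(1/M_{11},\dots,1/M_{nn})$. Let $Q$ be the law under which the $X_j$ are independent with $X_j\sim N(0,1/M_{jj})$.

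\emph{Lower bound.} For every admissible $\gamma$, the Gibbs (Donsker--Varadhan) inequality gives
\[
\log\mathbb E_P\bigl[e^{-V}\bigr]\ge -H(Q|P)-\mathbb E_Q[V].
\]
Under $Q$ the $X_j$ are centered and independent, so the gains process is a $Q$-martingale and $\mathbb E_Q[V]=0$. This requires $V\in L^1(Q)$, which I would handle by truncating $\gamma$ (replace $\gamma$ by $\gamma\mathbf 1_{\{|\gamma|\le N\}}$) and passing to the limit with Fatou's lemma. The restriction to the integrable case is harmless because $P\sim Q$ are both nondegenerate Gaussians. Thus $\mathbb E_P[e^{-V}]\ge e^{-H(Q|P)}$. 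The standard formula for the relative entropy of two centered Gaussians gives
\[
H(Q|P)=\tfrac12\bigl[\operatorname{tr}(MD)-n+\log\det\Gamma_n-\log\det D\bigr]=\tfrac12\Bigl[\log\det\Gamma_n+\sum_{i=1}^n\log M_{ii}\Bigr],
\]
since $\operatorname{tr}(MD)=n$. Here $D$ is chosen to minimise this expression over diagonal covariances. Diagonal covariances are forced, because martingale differences are uncorrelated, and among laws with a given covariance the Gaussian one minimises entropy.

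\emph{Upper bound (attainment).} I would check that $dQ/dP=c\,e^{-V^{\gamma^*}}$ for an explicit linear strategy $\gamma^*$. Indeed,
\[
\log\frac{dQ}{dP}(x)=\mathrm{const}-\tfrac12\,x^\top(D^{-1}-M)x .
\]
The matrix $D^{-1}-M$ has zero diagonal by the choice of $D$. Therefore
\[
-\tfrac12\,x^\top(D^{-1}-M)x=\sum_{j}x_j\sum_{i<j}M_{ij}x_i ,
\]
so $\gamma^*_{j-1}:=-\sum_{i<j}M_{ij}X_i$ is adapted and $e^{-V^{\gamma^*}}=c^{-1}dQ/dP$. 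Then
\[
\mathbb E_P\bigl[e^{-V^{\gamma^*}}\bigr]=\exp\bigl(-\mathbb E_Q[V^{\gamma^*}]-H(Q|P)\bigr)=e^{-H(Q|P)},
\]
which matches the lower bound. Consequently
\[
-\frac1n\log\inf_{\gamma}\mathbb E\bigl[e^{-V}\bigr]=\frac1{2n}\Bigl[\log\det\Gamma_n+\sum_{i=1}^n\log(\Gamma_n^{-1})_{ii}\Bigr]
\]
holds exactly for every $n$.

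\emph{Asymptotics.} The first term converges to $\frac1{2\pi}\int\log f_H$ by \eqref{szego-limit}. For the second term, note that
\[
(\Gamma_n^{-1})_{ii}=1/\Var\bigl(X_i\mid X_j,\ 1\le j\le n,\ j\ne i\bigr).
\]
Conditioning on more variables lowers the conditional variance. By stationarity this yields the uniform bounds
\[
1=1/\Var(X_0)\le(\Gamma_n^{-1})_{ii}\le L:=\frac1{2\pi}\int_{-\pi}^{\pi}\frac{d\lambda}{f_H(\lambda)},
\]
where the upper bound uses \eqref{interpolation-formula}. For each fixed $k$, whenever $k<i\le n-k$ we also have
\[
(\Gamma_n^{-1})_{ii}\ge 1/\Var\bigl(X_0\mid X_j,\ 0<|j|\le k\bigr)=:L_k .
\]
The sequence $L_k$ increases to $L$ as $k\to\infty$, by martingale convergence of conditional variances together with \eqref{interpolation-formula}. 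A Cesàro argument then shows that $\frac1n\sum_{i=1}^n\log(\Gamma_n^{-1})_{ii}\to\log L$: at most $2k$ boundary indices are controlled by the uniform bounds, and the interior indices lie between $\log L_k$ and $\log L$. This gives \eqref{certainty-equivalent-limit}.

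I expect the main obstacle to be the lower-bound step: justifying $\mathbb E_Q[V]=0$ for arbitrary, possibly non-integrable, adapted $\gamma$. The identification $\lim_k L_k=L$ should follow directly from \eqref{interpolation-formula}.
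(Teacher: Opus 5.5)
Most of your argument is correct: the entropy computation, the explicit strategy $\gamma^*$ (which is the paper's \eqref{optimal-strategy}), the resulting exact identity (which is \eqref{finite-n-value}, the formula the paper imports from Corollary~1.3(I) of \cite{DZ:2023}), and the asymptotic sandwich/Ces\`aro step (which matches the paper's). The gap is the step you flagged yourself, and the remedy you propose does not work. Truncation gives $\mathbb E_P[e^{-V^{(N)}}]\ge e^{-H(Q|P)}$ for $\gamma^{(N)}_{j-1}:=\gamma_{j-1}\mathbf 1_{\{|\gamma_{j-1}|\le N\}}$. But Fatou's lemma applied to $e^{-V^{(N)}}\to e^{-V}$ only yields $\mathbb E_P[e^{-V}]\le\liminf_N\mathbb E_P[e^{-V^{(N)}}]$, which is an upper bound, the wrong direction. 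Carrying the lower bound through the limit would need uniform integrability of $(e^{-V^{(N)}})_N$, and nothing supplies it. The remark that $P\sim Q$ are nondegenerate Gaussians does not help either: equivalence of measures does not make an arbitrary adapted $\gamma_{j-1}$, e.g.\ $e^{X_{j-1}^4}$, integrable under $Q$. As written, you have the lower bound only for bounded strategies, so the value over all of $\mathcal A_n$ is not identified.

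One repair avoids integrability entirely. Since $\log\frac{dQ}{dP}=H(Q|P)-V^{\gamma^*}$, for every $\gamma\in\mathcal A_n$ you have $\mathbb E_P[e^{-V^\gamma}]=e^{-H(Q|P)}\,\mathbb E_Q[e^{-W_n}]$, where $W_k:=\sum_{j\le k}\theta_{j-1}X_j$ and $\theta:=\gamma-\gamma^*$. Under $Q$, $X_n\sim N(0,1/M_{nn})$ is independent of $\sigma(X_1,\dots,X_{n-1})$. The tower property for nonnegative variables then gives
\[
\mathbb E_Q\bigl[e^{-W_n}\bigr]=\mathbb E_Q\Bigl[e^{-W_{n-1}}\exp\bigl(\theta_{n-1}^2/(2M_{nn})\bigr)\Bigr]\ge\mathbb E_Q\bigl[e^{-W_{n-1}}\bigr]\ge\cdots\ge1
\]
for every adapted $\gamma$, and as a bonus yields uniqueness of $\gamma^*$. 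Alternatively, you can keep the Gibbs inequality. If $\mathbb E_P[e^{-V}]<\infty$, then the Fenchel--Young inequality gives $\mathbb E_Q[V^-]\le\mathbb E_P[e^{-V}]+H(Q|P)<\infty$. The Jacod--Shiryaev theorem says that a discrete-time martingale transform whose terminal value has integrable negative part is a true martingale, so $\mathbb E_Q[V]=0$. With either repair, your route becomes a self-contained derivation of the finite-$n$ formula that the paper simply cites.
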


We note that the right-hand side of
\eqref{certainty-equivalent-limit} can be written as
\[
\frac12\log\left(
\frac{\operatorname{AM}(g_H)}
{\operatorname{GM}(g_H)}
\right),
\qquad g_H:=\frac{1}{f_H},
\]
where
\[
\operatorname{AM}(g_H)
:=
\frac{1}{2\pi}\int_{-\pi}^{\pi}g_H(\lambda)\,d\lambda
\]
is the average value of \(g_H\), and
\[
\operatorname{GM}(g_H)
:=
\exp\left(
\frac{1}{2\pi}\int_{-\pi}^{\pi}\log g_H(\lambda)\,d\lambda
\right)
\]
is the corresponding geometric mean.

The proof is based on the main result of \cite{DZ:2023}, which characterizes,
in the Gaussian framework, the value of the exponential utility maximization problem
\eqref{2.2} and the corresponding unique optimal portfolio. 
We arrive at the following limit theorem 
which is closely
related to Theorem 2.3 in \cite{GuasoniMishuraRasonyi}.

\begin{figure}[!ht]
\centering
\includegraphics[width=0.7\columnwidth]{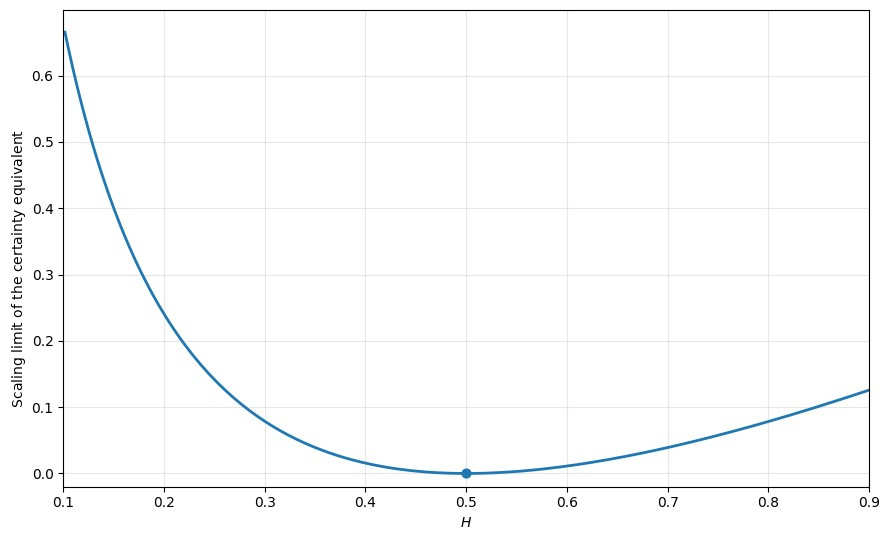}
\caption{\footnotesize
The scaling limit of the normalized certainty equivalent in
\eqref{certainty-equivalent-limit} as a function of the Hurst parameter $H$.
}
\label{fig:value_vs_H}
\end{figure}

\begin{theorem}\label{thm2}
For each $n\in\mathbb{N}$, let $\gamma^n\in\mathcal{A}_n$ be the unique
optimal portfolio for the utility maximization problem \eqref{2.2} in the
$n$-step market. Then, as $n\to\infty$, the sequence of Gaussian processes
$
\left(n^{-H}\gamma^n_{\frac{\lfloor nt\rfloor}{n}}\right)_{t\in(0,1)}
$
converges in finite-dimensional distributions to a centered Gaussian
process
$
(B_t)_{t\in(0,1)},
$
such that $B_s$ and $B_t$ are independent for every
$0<s<t<1$, and
\begin{equation*}\label{limit-variance}
\operatorname{Var}(B_t)=
\frac{1}{2\pi}
\int_{-\pi}^{\pi}
f_H(\lambda)
\left|
\sum_{k=1}^{\infty}
\left(
\frac{1}{2\pi}
\int_{-\pi}^{\pi}
\frac{e^{iku}}{f_H(u)}\,du
\right)e^{-ik\lambda}
\right|^2
d\lambda,
\qquad t\in(0,1).
\end{equation*}
\end{theorem}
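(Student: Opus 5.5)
The proof will rest on the explicit form of the optimal portfolio from the main result of \cite{DZ:2023}, combined with the scaling of fractional Brownian motion and the spectral (interpolation) description of the inverse of the Toeplitz matrices $\Gamma_n^H$ in \eqref{2}. By self-similarity, the grid increments $S_{i/n}-S_{(i-1)/n}$, $i=1,\dots,n$, have the law of $n^{-H}(X_1,\dots,X_n)$, with $X_j$ as in \eqref{4}. Hence their covariance matrix is $\Sigma_n=n^{-2H}\Gamma_n^H$ and $\Sigma_n^{-1}=n^{2H}(\Gamma_n^H)^{-1}$.

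\textbf{Step 1: the optimal portfolio as a linear functional of the noise.} I expect the characterization in \cite{DZ:2023} (the same one that produces the $\log\det$ and ``diagonal of the inverse'' terms behind Theorem~\ref{thm1}) to give the optimizer as an explicit linear functional of the observed increments. After the rescaling above this should read
\[
n^{-H}\gamma^n_{\frac{i-1}{n}}=-\sum_{j=1}^{i-1}\bigl((\Gamma_n^H)^{-1}\bigr)_{ij}X_j+R_{n,i},
\]
where $R_{n,i}$ collects possible additional terms, namely conditional expectations given $\mathcal F_{\frac{i-1}{n}}$ of future components of $(\Gamma_n^H)^{-1}X$. So the first task is to read off this formula from \cite{DZ:2023} and show that $R_{n,i}\to0$ in $L^2$ for $i=\lfloor nt\rfloor$ with $t\in(0,1)$ fixed. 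If $R_{n,i}$ does not vanish, I would instead keep it and check that its limit combines with the main sum into the same spectral expression. The point is that $(\Gamma_n^H)^{-1}X$ has components $\bigl((\Gamma_n^H)^{-1}\bigr)_{ii}\bigl(X_i-\mathbb E[X_i\mid X_j,\ j\neq i]\bigr)$, which are interpolation errors and are essentially orthogonal to the past.

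\textbf{Step 2: bulk rows of $(\Gamma_n^H)^{-1}$ converge to the infinite inverse.} Fix $t$ and set $i=\lfloor nt\rfloor$, so that both $i$ and $n-i$ tend to infinity. I would show that for every fixed $k\ge1$,
\[
\bigl((\Gamma_n^H)^{-1}\bigr)_{i,i-k}\to c_k:=\frac1{2\pi}\int_{-\pi}^{\pi}\frac{e^{iku}}{f_H(u)}\,du ,
\]
together with enough uniformity (an $\ell^2$-type tail control in $k$) to pass to the limit in $L^2$. The route is through interpolation rather than matrix analysis. Row $i$ of $(\Gamma_n^H)^{-1}$, divided by its diagonal entry, gives the coefficients of the interpolation error of $X_i$ from the window $\{1,\dots,n\}\setminus\{i\}$. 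Because this window exhausts $\mathbb Z\setminus\{i\}$ on both sides, the finite interpolation errors converge in $L^2$ to the infinite-window interpolation error. The infinite-window error has spectral representation $e^{-i\cdot i\lambda}/f_H(\lambda)$ normalized by \eqref{interpolation-formula}, whose Fourier coefficients are exactly the $c_k$. Stationarity then gives
\[
\Var\Bigl(\sum_{k\ge1}c_kX_{i-k}\Bigr)=\frac1{2\pi}\int_{-\pi}^{\pi}f_H(\lambda)\Bigl|\sum_{k\ge1}c_ke^{-ik\lambda}\Bigr|^2d\lambda ,
\]
which is the variance in the statement. Convergence of the truncated sum over $j<i$ requires that the past part of the interpolation error converge in $L^2$, that is, that the one-sided projection is continuous on the closed span. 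I would justify this by noting that $f_H\asymp|\lambda|^{1-2H}$ is an $A_2$-type weight when $|1-2H|<1$, so the Riesz projection $\sum_k a_ke^{-ik\lambda}\mapsto\sum_{k\ge1}a_ke^{-ik\lambda}$ is bounded on $L^2(f_H)$.

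\textbf{Step 3: Gaussian limit and asymptotic independence.} All quantities are centered linear functionals of the Gaussian sequence $(X_j)$, so fidi convergence reduces to convergence of the covariances. For $s<t$, the two limiting functionals are $Y_{i_1}$ and $Y_{i_2}$ with $Y_i=\sum_{k\ge1}c_kX_{i-k}$ and $i_2-i_1\asymp n(t-s)\to\infty$. Their covariance is a Fourier coefficient of $f_H\bigl|\sum_{k\ge1}c_ke^{-ik\lambda}\bigr|^2\in L^1$ at frequency $i_2-i_1$, so it tends to $0$ by the Riemann--Lebesgue lemma. The Step~2 approximation errors vanish in $L^2$ uniformly enough not to spoil this. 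The limit is therefore a centered Gaussian field with independent coordinates and the stated variance.

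\textbf{Main obstacle.} I expect the hard part to be Step~2, namely the uniform $L^2$ control of the finite-section inverse $(\Gamma_n^H)^{-1}$ against $c_k$ when $f_H$ has a zero ($H<1/2$) or a pole ($H>1/2$) at the origin. There the classical Baxter--Szeg\H{o} theory for smooth positive symbols does not apply. I would first try the projection/interpolation argument together with the $A_2$ boundedness of the Riesz projection. If that fails, I would fall back on known Fisher--Hartwig asymptotics for the inverses of Toeplitz matrices with symbol $|\lambda|^{1-2H}$ times a smooth positive function.
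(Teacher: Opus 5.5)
Your proposal is correct and is essentially the paper's argument. The paper also reads off $n^{-H}\gamma^n_{\frac{i-1}{n}}=-\sum_{j<i}\bigl((\Gamma_n^H)^{-1}\bigr)_{ij}X_j$ from \cite{DZ:2023}; there $R_{n,i}\equiv 0$ exactly, so your hedge and the Fisher--Hartwig fallback are unnecessary. It then proves that row $i$ of $(\Gamma_n^H)^{-1}$, viewed as a trigonometric polynomial, converges to $1/f_H$ in $L^2(f_H)$ by observing that it is the $f_H$-orthogonal projection of $1/f_H$ onto the frequencies $1-i,\dots,n-i$; your interpolation-error formulation is the same fact under the spectral isometry, at the small extra cost of normalizing by and then controlling the diagonal entry. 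After that the paper uses the $A_2$-boundedness of the negative-frequency projection and the Riemann--Lebesgue lemma exactly as you propose.
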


The limiting process may be interpreted as white-noise-type Gaussian field: its values at
distinct positive times are independent, centered Gaussian random variables
with the same variance. Thus, in the high-frequency limit, the dependence
between optimal positions at different trading times disappears.

\section{Proof of the Main Results}
We first prove Theorem~\ref{thm1}.

\begin{proof}[Proof of Theorem~\ref{thm1}]
${}$\\
We apply Corollary~1.3(I) of \cite{DZ:2023}. By the self-similarity of fractional Brownian motion, we obtain
\begin{equation}\label{finite-n-value}
\inf_{\gamma\in\mathcal{A}_n}
\mathbb{E}\left[\exp\left(-V_1^\gamma\right)\right]
=
\left(
\det\left(\Gamma_n^H\right)
\prod_{i=1}^n
\left(\Lambda_n^H\right)_{ii}
\right)^{-1/2},
\end{equation}
where $\Gamma_n^H$ is given by~\eqref{2} and $\Lambda_n^H:=(\Gamma_n^H)^{-1}$ is the corresponding precision matrix. 

In view of \eqref{szego-limit} and \eqref{finite-n-value}, it remains to show that 
\begin{equation}\label{precision-limit-proof}
\lim_{n\to\infty}
\frac1n\sum_{i=1}^n\log(\Lambda_n^H)_{ii}
=
\log\left(
\frac1{2\pi}\int_{-\pi}^{\pi}\frac{d\lambda}{f_H(\lambda)}
\right).
\end{equation}
To this end,
for $p,q\in\mathbb Z_+$, define the finite interpolation error
\[
v_{p,q}
:=
\Var\bigl(
X_0\mid X_{-p},\ldots,X_{-1},X_1,\ldots,X_q
\bigr),
\]
where $X_j$, $j\in\mathbb Z$, are given by \eqref{4}; an empty conditioning block is omitted.

From the stationarity of 
the Gaussian sequence $(X_j)_{j\in\mathbb Z}$, it follows (see Section 2 in \cite{RH:05}) that 
\begin{equation}\label{precision-as-interpolation}
(\Lambda_n^H)_{ii}
=
\frac1{v_{i-1,n-i}},
\qquad i=1,\ldots,n.
\end{equation}

Let
\[
\mathcal G_m
:=
\sigma(X_{-m},\ldots,X_{-1},X_1,\ldots,X_m),
\qquad
\mathcal G_\infty
:=
\sigma(X_j:\ j\in\mathbb Z\setminus\{0\}).
\]
Then $\mathcal G_m\uparrow\mathcal G_\infty$. By the $L^2$ martingale
convergence theorem,
$\mathbb E[X_0\mid\mathcal G_m]$ converges in $L^2$ to
$\mathbb E[X_0\mid\mathcal G_\infty]$, and hence
\begin{equation}\label{finite-errors-converge}
v_{m,m}\downarrow v_\infty
:=
\Var(X_0\mid\mathcal G_\infty).
\end{equation}
Formula \eqref{interpolation-formula} identifies this limiting error as
$$
v_\infty
=
\left(
\frac1{2\pi}\int_{-\pi}^{\pi}\frac{d\lambda}{f_H(\lambda)}
\right)^{-1}>0.
$$
Also, conditioning can only decrease variance, so
\begin{equation}\label{variance-bounds-proof}
v_\infty\leq v_{p,q}\leq\Var(X_0)=1,
\qquad p,q\in\mathbb Z_+.
\end{equation}

Fix $m\in\mathbb N$. If $m+1\leq i\leq n-m$, then the conditioning family
appearing in $v_{i-1,n-i}$ contains the one appearing in $v_{m,m}$. Hence,
by \eqref{variance-bounds-proof},
\[
v_\infty\leq v_{i-1,n-i}\leq v_{m,m}.
\]
Using \eqref{precision-as-interpolation}, we obtain
\[
-\log v_{m,m}
\leq
\log(\Lambda_n^H)_{ii}
\leq
-\log v_\infty
\]
for every $m+1\leq i\leq n-m$. For all $i=1,\ldots,n$,
\eqref{variance-bounds-proof} also gives
\[
0\leq\log(\Lambda_n^H)_{ii}\leq-\log v_\infty.
\]
There are $n-2m$ interior indices, and therefore, for $n>2m$,
\[
\frac{n-2m}{n}(-\log v_{m,m})
\leq
\frac1n\sum_{i=1}^n\log(\Lambda_n^H)_{ii}
\leq
-\log v_\infty.
\]
Finally, by taking $n\rightarrow\infty$ and then $m\rightarrow\infty$,
\eqref{finite-errors-converge} gives \eqref{precision-limit-proof}
and the proof is complete.
\end{proof}

Next, we prove Theorem \ref{thm2}.
\begin{proof}[Proof of Theorem~\ref{thm2}]
${}$\\
Fix $n\in\mathbb N$. Corollary~1.3(I) in \cite{DZ:2023} gives the optimal portfolio $\gamma^n\in\mathcal A_n$ as 
\begin{equation}\label{optimal-strategy}
n^{-H}\gamma^n_{\frac{i-1}{n}}
=
-\sum_{j=1}^{i-1}(\Lambda_n^H)_{ij}X_j,
\qquad i=1,\ldots,n.
\end{equation}

For $1\leq i\leq n$, define
\[
Q_{n,i}(\lambda)
:=
\sum_{r=1-i}^{n-i}
(\Lambda_n^H)_{i,i+r}e^{ir\lambda}.
\]
The identity $\Lambda_n^H\Gamma_n^H=I_n$ gives 
\[
\frac{1}{2\pi}
\int_{-\pi}^{\pi}
Q_{n,i}(\lambda)e^{-is\lambda}
f_H(\lambda)\,d\lambda
=
\delta_{0s}, \ \ s=1-i,...,n-i. 
\]
Here, $\delta_{0s}$ denotes the Kronecker delta, defined by
\[
\delta_{0s}
=
\begin{cases}
1, & s=0,\\
0, & s\neq 0.
\end{cases}
\]
The obvious relation 
\[
\frac{1}{2\pi}
\int_{-\pi}^{\pi}
\frac{1}{f_H(\lambda)}e^{-is\lambda}
f_H(\lambda)\,d\lambda
=
\delta_{0s}, \ \ s=1-i,...,n-i
\]
yields that $Q_{n,i}$ is the $\|\cdot\|_{f_H}$-orthogonal projection of $1/f_H$
onto the span of the frequencies $1-i,\ldots,n-i$.
Thus,
\begin{equation}\label{eq:Q-conv-short}
\min(i_n,n-i_n)\to\infty \ \Rightarrow \ Q_{n,i_n}\longrightarrow\frac{1}{f_H}
\ \ \text{in} \ \ \|\cdot\|_{f_H}.
\end{equation}

Let \(P_-\) denote the Fourier projection onto the strictly negative
frequencies, i.e.,
\[
P_-g(\lambda)
:=
\sum_{k<0}\widehat g(k)e^{ik\lambda},
\qquad
\widehat g(k):=\frac{1}{2\pi}\int_{-\pi}^{\pi}
g(\lambda)e^{-ik\lambda}\,d\lambda.
\]

The relation $f_H(\lambda)\asymp|\lambda|^{1-2H}$ as $\lambda\to0$,
together with the fact that $f_H$ is bounded above and away from zero outside
a neighborhood of zero, implies that $f_H$ is an $A_2$ weight because
$1-2H\in(-1,1)$. Hence, by Theorem~1 in
\cite{HuntMuckenhouptWheeden}, $P_-$ is bounded in the norm
$\|\cdot\|_{f_H}$.
We emphasize that its boundedness in the norm
\[
\|g\|_{f_H}^2
:=
\frac{1}{2\pi}
\int_{-\pi}^{\pi}|g(\lambda)|^2f_H(\lambda)\,d\lambda
\]
is not automatic from general Hilbert-space theory. Indeed, an
orthogonal projection onto a closed subspace of a Hilbert space is
always bounded, with operator norm one. However, \(P_-\) is the
orthogonal projection onto the negative Fourier frequencies with
respect to the unweighted \(L^2\)-inner product, and it is generally
not orthogonal with respect to the weighted inner product induced by $f_H$.

Next, define
\[
q_H(\lambda)
:=
P_-\left(\frac{1}{f_H}\right)(\lambda)
=
\sum_{k=1}^{\infty}c_k^He^{-ik\lambda},
\qquad
c_k^H
:=
\frac{1}{2\pi}
\int_{-\pi}^{\pi}
\frac{e^{iku}}{f_H(u)}\,du,
\]
where the Fourier series converges in $L^2(f_H)$. By the boundedness of
$P_-$ and \eqref{eq:Q-conv-short},
\begin{equation}\label{eq:negative-projection-conv}
\left\|P_-Q_{n,i_n}-q_H\right\|_{f_H}
\leq C_H\left\|Q_{n,i_n}-\frac{1}{f_H}\right\|_{f_H}
\longrightarrow0
\end{equation}
whenever $\min(i_n,n-i_n)\to\infty$.

Introduce the centered stationary Gaussian sequence
\[
Y_m:=-\sum_{k=1}^{\infty}c_k^HX_{m-k},
\qquad m\in\mathbb Z,
\]
where the series converges in $L^2$.

Observe that the variance of $Y_m$ is independent of $m$ and equals
\begin{equation}\label{eq:sigma-short}
\sigma_H^2
:=
\frac{1}{2\pi}
\int_{-\pi}^{\pi}
f_H(\lambda)
\left|
\sum_{k=1}^{\infty}
\left(
\frac{1}{2\pi}
\int_{-\pi}^{\pi}
\frac{e^{iku}}{f_H(u)}\,du
\right)e^{-ik\lambda}
\right|^2
d\lambda.
\end{equation}

Moreover,
\[
\operatorname{Cov}(Y_0,Y_m)
=
\frac{1}{2\pi}
\int_{-\pi}^{\pi}
e^{im\lambda}f_H(\lambda)|q_H(\lambda)|^2\,d\lambda.
\]
Since $f_H|q_H|^2$ is integrable, the Riemann--Lebesgue lemma yields
\begin{equation}\label{eq:cov-short}
\operatorname{Cov}(Y_0,Y_m)\longrightarrow0
\qquad\text{as }|m|\to\infty.
\end{equation}

Finally, choose $0<s<t<1$, and let $(i_n)_{n\in\mathbb N}$ and
$(j_n)_{n\in\mathbb N}$ be sequences such that
$i_n/n\to s$ and $j_n/n\to t$. From \eqref{optimal-strategy} and
\eqref{eq:negative-projection-conv}, using the spectral isometry, we obtain
\[
n^{-H}\gamma^n_{\frac{i_n}{n}}-Y_{i_n+1}
\longrightarrow0
\qquad\text{in }L^2,
\]
and likewise with $j_n$ in place of $i_n$. Hence, by
\eqref{eq:sigma-short} and \eqref{eq:cov-short},
\[
\operatorname{Var}\left(
n^{-H}\gamma^n_{\frac{i_n}{n}}
\right)
\longrightarrow\sigma_H^2
\qquad\text{and}\qquad
\operatorname{Cov}\left(
n^{-H}\gamma^n_{\frac{i_n}{n}},
n^{-H}\gamma^n_{\frac{j_n}{n}}
\right)
\longrightarrow0.
\]
This completes the proof. 
\end{proof}

\end{document}